\DeclareMathOperator{\sgn}{sgn}
\newcommand{\E}{\mathbb{E}}
\newcommand{\Prob}{\mathbb{P}}
\newcommand*{\rom}[1]{\expandafter\@slowromancap\romannumeral #1@}
\newtheorem{theorem}{Theorem}
\newtheorem{lemma}{Lemma}
\newtheorem{remark}{Remark}
\title{\LARGE \bf Optimal Sensor Scheduling and Remote Estimation over an Additive Noise Channel}
\author{Xiaobin Gao, Emrah Akyol, and Tamer Ba\c{s}ar \thanks{This research was supported in part by NSF, and in part by the U.S. Air Force Office of
Scientific Research (AFOSR) MURI grant FA9550-10-1-0573.}\thanks{All authors are with the Coordinated Science Laboratory, University of Illinois at Urbana-Champaign, Urbana, IL 61801; emails: \{xgao16, akyol, basar1\}@illinois.edu}}
\begin{document}

\maketitle
\thispagestyle{empty}
\pagestyle{empty}

\begin{abstract}
We consider a sensor scheduling and remote estimation problem with one sensor and one estimator. At each time step, the sensor makes an observation on the state of a source, and then decides whether to transmit its observation to the estimator or not. The sensor is charged a cost for each transmission. The remote estimator generates a real-time estimate on the state of the source based on the messages received from the sensor.  The estimator is charged for estimation error. As compared with previous works from the literature, we further assume that there is an additive communication channel noise. As a consequence, the sensor needs to encode the message before transmitting it to the estimator. For some specific distributions of the underlying random variables, we obtain the optimal solution to the problem of minimizing the expected value of the sum of communication cost and estimation cost over the time horizon.
\end{abstract}

\section{Introduction}
The sensor scheduling and remote state estimation problem arises in the applications of wireless sensor networks, such as environmental monitoring and networked control systems. As an example of environmental monitoring, people in the National Aeronautics and Space Administration (NASA) Earth Science group want to monitor the evolution of the soil moisture, which is used in weather forecast, ecosystem process simulation, etc \cite{NASA10}. In order to achieve that goal, the sensor networks are built over an area of interest. The sensors collect data on the soil moisture and send them to the decision unit at NASA via wireless communication. The decision unit at NASA forms estimates on the evolution of the soil moisture based on the messages received from the sensors. Similarly, in networked control systems, the objective is to control some remote plants. Sensor networks are built to measure the states of the remote plants, which then transmit their measurements to the controller via a wireless communication network. The controller estimates the state of the remote plant and generates the control signal based on that estimate \cite{Hespanha07}. In both scenarios, the quality of the remote state estimation strongly affects the quality of decision making at the remote site, that is, weather prediction or control signal generation. The networked sensors are usually constrained by limits on energy. They are not able to communicate with the estimator at every time step and thus, the estimator has to produce its best estimate based on the partial information received from the sensors. Therefore, the communication between the sensors and the estimator should be scheduled wisely, and the estimator should be designed properly, so that the state estimation error is minimized subject to the communication constraints.

The sensor scheduling and remote state estimation problem has been extensively studied in recent years. Imer and Ba\c{s}ar \cite{Imer10} considered the model where the sensor is allowed to communicate with the estimator only a limited number of times. The continuous-time version of the problem in \cite{Imer10} has been studied by Rabi \emph{et al.} \cite{Rabi06}. Xu and Hespanha \cite{Xu04} considered the networked control problem involving the state estimation and communication scheduling, which can be viewed as the sensor scheduling and remote estimation problem. Wu \emph{et al.} \cite{Wu13} considered the sensor scheduling and estimation problem subject to constraints on the average communication rate over the infinite-time horizon, which can also be viewed as Kalman-filtering with scheduled observations. 
Lipsa and Martins \cite{Lipsa11} considered the setup where the sensor is not constrained by communication times but is charged a communication cost. Nayyar \emph{et al.} \cite{Nayya13} considered a similar problem with an energy harvesting sensor.

In previous works, the communication between the sensor and the estimator has been assumed to be perfect (no additive channel noise), which may not be realistic, even though it was an important first step.  This paper investigates the effect of communication channel noise on the design of optimal sensor scheduling and remote estimation strategies. We consider a discrete time sensor scheduling and remote estimation problem over a finite-time horizon, where there is one sensor and one remote estimator. We assume that at each time step, the sensor makes a perfect observation on the state of an independent identically distributed (i.i.d.) source. Next, the sensor decides whether to transmit its observation to the remote estimator or not. The sensor is charged a cost for each transmission (communication cost). Since the communication channel is noisy, the sensor encodes the message before transmitting it to the estimator. The remote estimator generates real-time estimate on the state of the source based on the noise corrupted messages received from the sensor. The estimator is charged for estimation error (estimation cost).  Our goal is to design the communication scheduling strategy and encoding strategy for the sensor, and the estimation strategy (decoding strategy) for the estimator, to minimize the expected value of the sum of communication cost and estimation cost over the time horizon. Our solution consists of a threshold-based communication scheduling strategy, and a pair of piecewise linear encoding/decoding strategies. We show that the proposed solution is  optimal when the random variables have some specific distributions.
%

\section{Problem Formulation}
\label{ProblemFormulation}
\subsection{System Model}
\begin{figure}[h]
\centering
\includegraphics[height=24mm]{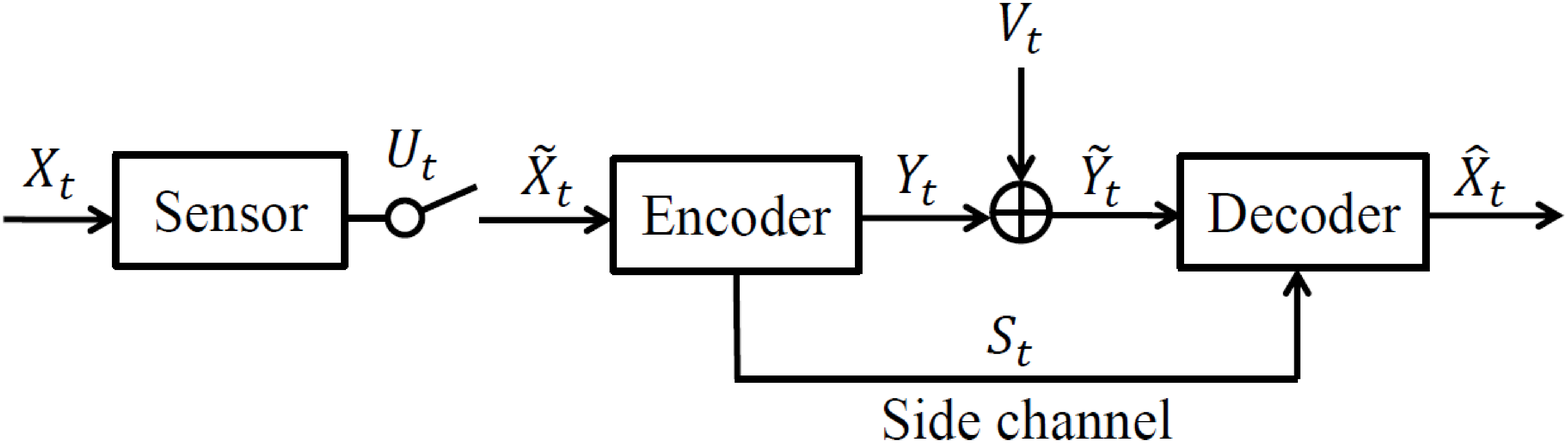}
\caption{System model}
\end{figure}
Consider a discrete time sensor scheduling and remote estimation problem over a finite-time horizon, that is, $t=1,2,\ldots,T$. In the problem, there is \emph{one} sensor, \emph{one} encoder and \emph{one} remote estimator (which is also called ``decoder"). The sensor observes an independent identically distributed (i.i.d.) stochastic process, $\{X_t\}$, $X_t\in\mathbb{R}$, which has Laplace density, $p_X$, with parameters $(0,\lambda^{-1})$. Hence,
\begin{equation*}
p_X(x) =
\begin{cases}
\frac{1}{2}\lambda e^{-\lambda x}, & \mbox{if } x\geq 0 \vspace{0.2cm}
\\ \frac{1}{2}\lambda e^{\lambda x}, & \mbox{if } x< 0
\end{cases}
\end{equation*}
Assume that at time $t$ the sensor has perfect observation on $X_t$.  Then the sensor decides whether to transmit the measurement to the encoder or not. Let $U_t\in\{0,1\}$ be the sensor's decision at time $t$, where $U_t=0$ stands for no communication and $U_t=1$ stands for communication. If the sensor communicates at time $t$, it will be charged a cost $c$. ``No communication" results in zero communication cost. Assume that the communication between the sensor and the encoder is perfect. Denote by $\tilde{X}_t$ the message received by the encoder, that is
\begin{equation*}
\tilde{X}_t =
\begin{cases}
X_t, & \mbox{if } U_t=1 \vspace{0.2cm}
\\ \epsilon, & \mbox{if } U_t=0
\end{cases}
\end{equation*}
where $\epsilon$ is a free symbol standing for no message is transmitted. Once the encoder receives the message from the sensor, it sends an encoded message to the decoder, denoted by $Y_t$. The encoder will not send any message to the decoder if it does not receive any message from the sensor, which is denoted by $Y_t=\epsilon$. Assume that the encoded message is corrupted by an additive channel noise $V_t$, $V_t\in\mathbb{R}$. $\{V_t\}$ is an i.i.d. random process, which is independent of $\{X_t\}$. We take $V_t$ to have gamma distribution $\Gamma(k,\theta)$. Denoting the message received by the decoder by $\tilde{Y}_t$, we have
\begin{equation*}
\tilde{Y}_t =
\begin{cases}
Y_t+V_t, & \mbox{if } Y_t\neq\epsilon \vspace{0.2cm}
\\ \epsilon, & \mbox{if } Y_t=\epsilon
\end{cases}
\end{equation*}
When sending the encoded message to the decoder, the encoder will transmit the sign of $\tilde{X}_t$ to the decoder via a side channel, denoted by $S_t$. Again, the encoder will not send any message to the decoder via the side channel if it does not receive any message from the sensor. Assume that the side channel is noise-free; then
\begin{equation*}
S_t =
\begin{cases}
\sgn(\tilde X_t), & \mbox{if } \tilde X_t\neq\epsilon \vspace{0.2cm}
\\ \epsilon, & \mbox{if } \tilde X_t=\epsilon
\end{cases}
\end{equation*}
After receiving $\tilde{Y}_t$ and $S_t$, the decoder produces an estimate on $X_t$, denoted by $\hat{X}_t$. The estimator will be charged for distortion in estimation. Assume that the distortion function $\rho(X_t,\hat{X}_t)$ is the squared error $(X_t-\hat{X}_t )^2$, and the encoder has average power constraint:
$$
\E[Y_t^2|U_t=1]\leq P_T,
$$
where $P_T$ is known. The cost  at time $t$ is
$$
cU_t+{(X_t-\hat{X}_t)}^2, \;\;\;c > 0,
$$
where $cU_t$ is the communication cost and $(X_t-\hat{X}_t )^2$ is the estimation cost.

\subsection{Decision Strategies}
Assume that at time $t$, the sensor has memory on all its observations up to $t$, denoted by $X_{1:t}$, and all the decisions it has made up to $t-1$, denoted by $U_{1:t-1}$. The sensor determines whether to communicate or not at time $t$ based on its current information $(X_{1:t},U_{1:t-1})$, namely
$$
U_t=f_t(X_{1:t},U_{1:t-1}),
$$
where $f_t$ is the scheduling policy of the sensor at time $t$ and $\textbf{f}=\{f_1,f_2,\ldots,f_T\}$ is the scheduling strategy of the sensor.

Similarly, at time $t$, the encoder has memory on all the messages received from the sensor up to $t$, denoted by $\tilde{X}_{1:t}$, and all the encoded messages it has sent to the decoder up to $t-1$, denoted by $Y_{1:t-1}$. The encoder generates the encoded message at time $t$ based on its current information $(\tilde{X}_{1:t},Y_{1:t-1})$, namely
$$
Y_t=g_t(\tilde{X}_{1:t},Y_{1:t-1}),
$$
where $g_t$ is the encoding policy of the encoder at time $t$ and $\textbf{g}=\{g_1,g_2,\ldots,g_T\}$ is the encoding strategy.

Finally, assume that at time $t$, the decoder has memory on all the messages received from the encoder up to $t$, denoted by $\tilde{Y}_{1:t},S_{1:t}$. The decoder generates the estimate at time $t$ based on its current information $(\tilde{Y}_{1:t},S_{1:t})$, namely
$$
\hat{X}_t=h_t(\tilde{Y}_{1:t},S_{1:t})
$$
where $h_t$ is the policy of the decoder at time $t$ and $\textbf{h}=\{h_1,h_2,\ldots,h_T\}$ is the decoding strategy.

\begin{remark}Although we do not assume that the encoder has memory on $U_{1:t},S_{1:t}$, yet it can deduce them from $\tilde{X}_{1:t}$. Similarly, the decoder can obtain $U_{1:t}$ from $\tilde{Y}_{1:t}$. \end{remark}

\subsection{Assumptions on the Parameters}
Denote by $\sigma_V^2$ the variance of $V_t$, and recall that $V_t$ has gamma distribution $\Gamma(k,\theta)$. Then, $\sigma_V^2=k\theta^2$. Define $\alpha=:\lambda\sqrt{P_T}$, and $\gamma:=\frac{P_T}{\sigma_V^2}$. Assume that
$$
\theta=\sqrt{P_T}.
$$
Then, we have
\begin{equation}
\label{assumptions on parameters}
\alpha=\lambda\theta \mbox{, } \gamma=\frac{1}{k}.
\end{equation}
\begin{remark}
These parameters as well as the form of source and channel noise distributions, even though they might seem arbitrary at first sight, are carefully chosen to render the affine encoding and decoding strategies optimal, as demonstrated later in Lemma 3. In practice, such a coincidence could be a deliberate objective of an adversarial agent that controls the channel noise density, since affine encoding/decoding strategies are worst-case optimal encoding/decoding policies under second order statistical constraints (such as encoding power) and distortion measures (such as mean squared error), as studied in \cite{akyol2013optimal}.
\end{remark}

\subsection{Optimization Problem}
Consider the system described above, given the time horizon $T$, the statistics of $\{X_t\}$ and $\{V_t\}$, the communication cost $c$, and the power constraint $P_T$. Determine the scheduling strategy, encoding strategy, and decoding strategy $(\textbf{f},\textbf{g},\textbf{h})$ for the sensor, the encoder, and the decoder, respectively, that minimize the expected value of the sum of communication and estimation costs over $T$, namely,
$$
J(\textbf{f},\textbf{g},\textbf{h})= \E \left\{\sum_{t=1}^T cU_t+{(X_t-\hat{X}_t)}^2\right\}
$$
subject to the power constraint of the encoder.

\section{Prior Work}
\label{Prior Work}
We review some results of a communication problem studied in \cite{Akyol13}. Consider the communication system described in Fig.\ref{priorWork}.
\begin{figure}[h]
\centering
\includegraphics[height=15mm]{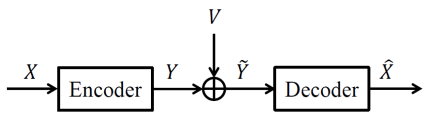}
\caption{Communication system considered in earlier work}
\label{priorWork}
\end{figure}
The encoder wants to transmit an input signal $X$ to the decoder via a communication channel. $X$ is a random variable, $X\in\mathbb{R}$. The communication channel has an additive channel noise $V$. $V$ is also a random variable, $V\in\mathbb{R}$. Assume that $X$ and $V$ are independent, and have  characteristic functions $F_{X}(\omega)$ and $F_{V}(\omega)$, respectively. Denote by $\sigma^2_{X}$ and $\sigma^2_{V}$ the variances of $X$ and $V$, respectively. Since the communication channel is noisy, the encoder encodes the message before sending it to the decoder. Assume that the encoder generates the message $Y$ according to an encoding policy $g$:
$$
Y = g(X).
$$
The encoder is constrained by average power $P_{T}$ such that
$$
\E[Y^2] \leq P_{T}.
$$
The decoder receives the noise corrupted message $Y+V$, denoted by $\tilde{Y}$, and generates an estimate of $X$, denoted by $\hat{X}$, according to some decoding policy $h$:
$$
\hat{X} = h(\tilde{Y}).
$$

\textit{Problem 1}: Given zero-mean random variables $X$ and $V$, and the power constraint of the encoder $P_T$, find  $g$ and $h$ that minimize
$$
J(g,h)=\E[(X-\hat{X})^2],
$$
subject to $\E[g^2(X)] \leq P_T$.

We reproduce the following theorem from \cite{Akyol13}.
\begin{theorem} [\cite{Akyol13}]In Problem 1,
 the optimal encoder and decoder are both linear if and only if the characteristic functions of $X$ and $V$ satisfy
    $$
    F_{X}(\alpha\omega) = F_{V}^{\gamma}(\omega),
    $$
    where $\alpha=\sqrt{\frac{P_{T}}{\sigma^2_{X}}}$ and $\gamma=\frac{P_{T}}{\sigma^2_{V}}$. Moreover, the linear encoding/decoding policies $(g^\ast,h^\ast)$ are as follows:
    \begin{equation}
    \label{prior work optimal encoding and decoding}
    \begin{array}{lcl}
    Y &=& g^\ast(X) = \alpha X, \vspace{0.2cm}
    \\ \hat{X} &=& h^\ast(\tilde{Y}) = \frac{1}{\alpha}\frac{\gamma}{\gamma+1}\tilde{Y}.
    \end{array}
    \end{equation}
\label{matching conditions}
\end{theorem}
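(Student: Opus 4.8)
My plan is to reduce the two-sided (encoder and decoder) statement to a single sharp fact: \emph{given a linear encoder, the minimum-mean-square-error (MMSE) decoder is itself linear if and only if the matching condition holds}. The starting observation is standard: for any fixed encoder $g$, the cost $\E[(X-h(\tilde Y))^2]$ is minimized over $h$ by the conditional mean $h^\ast(\tilde y)=\E[X\mid\tilde Y=\tilde y]$, so both directions hinge on understanding when this conditional mean is affine in $\tilde Y$. I would first dispose of the encoder side: restricting to linear encoders $g(x)=ax$, the power constraint $\E[g^2(X)]=a^2\sigma_X^2\le P_T$ is active at the optimum (raising $|a|$ only improves the channel observation in the MMSE sense, via a data-processing comparison of $X+V/a$ and $X+V/a'$), so $a=\alpha=\sqrt{P_T/\sigma_X^2}$ up to sign, matching the claimed $g^\ast(X)=\alpha X$.

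The heart of the argument is a characteristic-function computation. Writing $Z=\alpha X+V$ with $X\perp V$, linearity of the conditional mean, $\E[X\mid Z]=\beta Z$, is equivalent to the orthogonality relations $\E[(X-\beta Z)e^{i\omega Z}]=0$ for all $\omega$. Expanding each expectation through independence and differentiating the characteristic functions, I obtain
\[
-iF_X'(\alpha\omega)F_V(\omega)=\beta(-i)\big[\alpha F_X'(\alpha\omega)F_V(\omega)+F_X(\alpha\omega)F_V'(\omega)\big].
\]
The second-moment instance $\E[(X-\beta Z)Z]=0$ pins the slope to the linear-MMSE value $\beta=\E[XZ]/\E[Z^2]=\tfrac1\alpha\tfrac{\gamma}{\gamma+1}$, which is exactly the decoder in~\eqref{prior work optimal encoding and decoding}. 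Substituting this $\beta$ and writing $P(\omega)=F_X(\alpha\omega)$, $Q(\omega)=F_V(\omega)$, the identity collapses to the separable first-order ODE $P'/P=\gamma\,Q'/Q$; integrating with the normalization $P(0)=Q(0)=1$ yields precisely $F_X(\alpha\omega)=F_V^{\gamma}(\omega)$. This computation is reversible, so it simultaneously establishes that the matching condition is equivalent to linearity of the conditional-mean decoder for the power-optimal linear encoder.

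With this equivalence in hand the \textbf{necessity} direction is immediate: if the globally optimal $(g,h)$ are both linear, then the optimal decoder, being the conditional mean, is linear for $g^\ast(X)=\alpha X$, and reversibility forces $F_X(\alpha\omega)=F_V^{\gamma}(\omega)$. It is also worth recording the encoder Euler--Lagrange condition for the Lagrangian $\E[(X-h(g(X)+V))^2]+\mu\,\E[g^2(X)]$; evaluated at the affine pair with $\E[V]=0$ it reduces to $\beta(1-\alpha\beta)x=\mu\alpha x$, giving a feasible nonnegative multiplier $\mu=\beta(1-\alpha\beta)/\alpha>0$ and confirming that the affine encoder is person-by-person optimal against the affine decoder (for \emph{any} pair, matched or not).

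The genuinely hard part is the \textbf{sufficiency} (global-optimality) direction. Assuming the matching condition, the computation above shows the affine decoder is the true MMSE decoder for the affine encoder, with distortion $\sigma_X^2\sigma_V^2/(P_T+\sigma_V^2)=\sigma_X^2/(1+\gamma)$; but stationarity and person-by-person optimality do not by themselves certify a global optimum, since $J(g,h)$ is not jointly convex and mismatched source--channel pairs are well known to admit strictly better nonlinear schemes. The obstacle, then, is to produce an independent lower bound $\E[(X-\hat X)^2]\ge \sigma_X^2/(1+\gamma)$ valid for \emph{every} power-constrained (possibly nonlinear) $(g,h)$ and tight exactly under matching. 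I would attack this by exploiting the matched structure $F_X(\alpha\omega)=F_V^{\gamma}(\omega)$---which says $\alpha X$ has the same law as the $\gamma$-fold convolution power of $V$---to build an estimation-theoretic inequality (in the spirit of MMSE/Fisher-information bounds, but calibrated to hold with equality precisely at the matched pair) that the affine scheme meets with equality. Closing this lower bound is where essentially all the difficulty of the theorem resides; the characteristic-function reduction handles everything else.
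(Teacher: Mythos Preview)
The paper does not prove this theorem at all: Theorem~1 is explicitly reproduced from \cite{Akyol13} and used as a black box to derive Lemmas~1--3. There is therefore no ``paper's own proof'' to compare against; your proposal is an attempt to supply what the paper merely cites.

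On the substance of your attempt: the necessity direction is handled correctly. The reduction to the orthogonality relations $\E[(X-\beta Z)e^{i\omega Z}]=0$, the identification of $\beta$ from the second moment, and the integration of $P'/P=\gamma\,Q'/Q$ to obtain $F_X(\alpha\omega)=F_V^{\gamma}(\omega)$ are all sound, and this is indeed the standard route (it is essentially how the cited reference argues necessity as well). Your remark that the power constraint must bind is also right.

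Where your proposal has a genuine gap is exactly where you say it does: sufficiency. You correctly observe that person-by-person stationarity of the affine pair does not certify global optimality over all nonlinear $(g,h)$, and you then outline only a \emph{plan} (``build an estimation-theoretic inequality \ldots\ that the affine scheme meets with equality'') without executing it. That is the entire content of the hard direction; without it the theorem is not proved. In \cite{Akyol13} this step is closed not by a generic Fisher-information bound but by a dual/variational argument specific to the matched structure: one shows that under $F_X(\alpha\omega)=F_V^{\gamma}(\omega)$ the linear pair achieves a lower bound obtained from the optimal-transport/rearrangement characterization of the MMSE, so that equality is forced. Your sketch gestures in this direction but does not supply the bounding inequality or its tightness, so as written the proposal establishes only the ``only if'' half.
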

We apply Theorem \ref{matching conditions} to obtain the following lemma.
\begin{lemma}
\label{special case of matching conditions}
Consider the communication problem described above, assume that $X$ and $V$ can be written as
$$
X = X_e - \lambda^{-1}\mbox{, }V = V_g - k\theta
$$
where $X_e$ has exponential distribution with parameter $\lambda$, and $V_g$ has gamma distribution with parameters $(k,\theta)$. Furthermore, let $\theta$ and $P_T$ satisfy $\theta=\sqrt{P_T}$. Then the optimal encoding/decoding policies $(g^{\ast},h^{\ast})$ are as described in \eqref{prior work optimal encoding and decoding} with $\alpha = \lambda\sqrt{P_T}$ and $\gamma = \frac{P_{T}}{k\theta^2}$.
\end{lemma}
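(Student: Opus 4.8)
The plan is to invoke Theorem~\ref{matching conditions} directly: the claimed policies $(g^\ast,h^\ast)$ are precisely the linear ones that theorem produces, so it suffices to verify that its hypotheses hold for the shifted exponential source $X$ and the shifted gamma noise $V$. First I would confirm that both are zero-mean, as Theorem~\ref{matching conditions} requires: $\E[X]=\E[X_e]-\lambda^{-1}=0$ and $\E[V]=\E[V_g]-k\theta=0$. Since a deterministic shift leaves the variance unchanged, $\sigma_X^2=\lambda^{-2}$ and $\sigma_V^2=k\theta^2$, which yield exactly $\alpha=\sqrt{P_T/\sigma_X^2}=\lambda\sqrt{P_T}$ and $\gamma=P_T/\sigma_V^2=P_T/(k\theta^2)$, matching the statement. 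The only substantive task left is the characteristic-function matching condition $F_X(\alpha\omega)=F_V^{\gamma}(\omega)$.

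Next I would compute the two characteristic functions. The exponential law with parameter $\lambda$ has characteristic function $(1-i\omega/\lambda)^{-1}$, and subtracting the mean $\lambda^{-1}$ multiplies it by $e^{-i\omega/\lambda}$, so
\begin{equation*}
F_X(\omega)=e^{-i\omega/\lambda}\left(1-\frac{i\omega}{\lambda}\right)^{-1}.
\end{equation*}
Similarly, the gamma law $\Gamma(k,\theta)$ has characteristic function $(1-i\theta\omega)^{-k}$, and centering by $k\theta$ gives
\begin{equation*}
F_V(\omega)=e^{-ik\theta\omega}\left(1-i\theta\omega\right)^{-k}.
\end{equation*}

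The heart of the argument is then a substitution using the parameter identities in \eqref{assumptions on parameters}. On the left, $\alpha/\lambda=\sqrt{P_T}$, so $F_X(\alpha\omega)=e^{-i\sqrt{P_T}\,\omega}\bigl(1-i\sqrt{P_T}\,\omega\bigr)^{-1}$. On the right, with $\theta=\sqrt{P_T}$ and $\gamma=1/k$ one has the two decisive cancellations $k\gamma=1$ and $k\theta\gamma=\sqrt{P_T}$, whence $F_V^{\gamma}(\omega)=e^{-ik\theta\gamma\omega}(1-i\theta\omega)^{-k\gamma}=e^{-i\sqrt{P_T}\,\omega}\bigl(1-i\sqrt{P_T}\,\omega\bigr)^{-1}$. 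The two expressions coincide, so the matching condition is satisfied and Theorem~\ref{matching conditions} delivers the stated linear $(g^\ast,h^\ast)$ with $\alpha=\lambda\sqrt{P_T}$ and $\gamma=P_T/(k\theta^2)$.

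The one point demanding care is that $\gamma=1/k$ is in general not an integer, so the power $F_V^{\gamma}$ is a priori ambiguous. I would resolve this by taking the principal branch fixed by $F_V^{\gamma}(0)=1$ together with continuity in $\omega$; since $1-i\theta\omega$ has positive real part for all real $\omega$, it never crosses the branch cut, the principal branch of $(1-i\theta\omega)^{-k\gamma}$ is well defined and reduces to $(1-i\theta\omega)^{-1}$ when $k\gamma=1$, and the unimodular phase factor raises continuously to $e^{-ik\theta\gamma\omega}$. This branch bookkeeping is the only genuine subtlety; once the convention is pinned down, the remaining verification is an elementary cancellation rather than an obstacle.
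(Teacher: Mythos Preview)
Your proposal is correct and follows essentially the same route as the paper: verify the zero-mean and variance identities to pin down $\alpha$ and $\gamma$, compute the characteristic functions of the shifted exponential and shifted gamma, use the identities $\alpha=\lambda\theta$ and $\gamma=1/k$ from \eqref{assumptions on parameters} to check the matching condition $F_X(\alpha\omega)=F_V^{\gamma}(\omega)$, and invoke Theorem~\ref{matching conditions}. Your added remark on the principal-branch convention for $F_V^{\gamma}$ is a nice point of rigor that the paper leaves implicit.
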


\begin{proof}From the definitions of $X$ and $V$, we have $\E[X],\E[V]=0$, $\sigma^2_X=\lambda^{-2}$ and $\sigma^2_V=k\theta^2$. Then $\alpha = \sqrt{\frac{P_{T}}{\sigma^2_{X}}} = \lambda\sqrt{P_T}$ and $\gamma = \frac{P_{T}}{\sigma^2_{V}} = \frac{P_{T}}{k\theta^2}$. By the assumptions on the parameters \eqref{assumptions on parameters}
$$
\begin{array}{lcl}
F_X(\alpha\omega) &=& \E\left[e^{j\alpha\omega (X_e-\lambda^{-1})}\right] = \E\left[e^{j\alpha\omega X_e}\right]e^{-j\alpha\omega\lambda^{-1}} \vspace{0.2cm}
\\ &=& (1-j\omega\theta)^{-1}e^{-j\omega\theta}.
\end{array}
$$
Similarly,
$$
F_{V}(\omega) = \E\left[e^{j\omega (V_g-k\theta)}\right] = \left[(1-j\omega\theta)^{-1}e^{-j\omega\theta}\right]^k.
$$
Hence,
$$
F_{V}^{\gamma}(\omega) = \left[F_{V}(\omega)\right]^{\frac{1}{k}} = F_X(\alpha\omega).
$$
Applying Theorem \ref{matching conditions}, we obtain the desired result.

\end{proof}
We next extend Theorem \ref{matching conditions} to variables with non-zero means.
\begin{lemma}
\label{matching conditions transform}
Let the optimal encoding/decoding policies to \textit{Problem 1} be $(g^\ast,h^\ast)$. Consider the communication problem with non-zero mean random variables $X^\prime$ and $V^\prime$; call it \textit{Problem 2}. Assume that $X^\prime$ and $V^\prime$ are affine transforms of $X$ and $V$:
$$
X^\prime = aX+b_1 \mbox{, } V^\prime = V+b_2,
$$
where $a,b_1,b_2$ are known constants, $a\in\{-1,+1\}$, $b_1,b_2\in\mathbb{R}$. Then, the optimal encoding/decoding policies for \textit{Problem 2}, denoted by $(g^{\prime\ast},h^{\prime\ast})$, are
$$
\begin{array}{lcl}
g^{\prime\ast}(X^\prime) &=& g^\ast\left(\frac{X^\prime-b_1}{a}\right), \vspace{0.2cm}
\\ h^{\prime\ast}(\tilde{Y^\prime}) &=& a\cdot h^\ast(\tilde{Y}^\prime-b_2)+b_1.
\end{array}
$$
Moreover, the  costs of the two problems are equivalent.
\end{lemma}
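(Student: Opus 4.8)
The plan is to exhibit a cost-preserving bijection between the feasible policy pairs of \textit{Problem 2} and those of \textit{Problem 1}, and then transfer optimality across this correspondence. The guiding observation is that $X^\prime$ and $V^\prime$ differ from $X$ and $V$ only by a known affine change of coordinates, so every quantity appearing in \textit{Problem 2} can be re-expressed in the coordinates of \textit{Problem 1} without any loss of information.

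First I would set up the correspondence at the encoder. Given any $g$ for \textit{Problem 1}, define $g^\prime$ for \textit{Problem 2} by $g^\prime(x^\prime) = g((x^\prime - b_1)/a)$; this is a bijection on encoding policies, with inverse $g(x) = g^\prime(ax + b_1)$. Since $X^\prime = aX + b_1$ gives $(X^\prime - b_1)/a = X$, the transmitted signals coincide, $Y^\prime = g^\prime(X^\prime) = g(X) = Y$. Consequently the power expenditures are identical, $\E[(g^\prime(X^\prime))^2] = \E[(g(X))^2]$, so the two power constraints define the same feasible set under this bijection.

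Next I would track the channel and the decoder. Because $V^\prime = V + b_2$, the received signal satisfies $\tilde{Y}^\prime = Y^\prime + V^\prime = \tilde{Y} + b_2$, and the constant $b_2$ is known to the decoder, so it may form $\tilde{Y}^\prime - b_2 = \tilde{Y}$ exactly. Defining $h^\prime(\tilde{Y}^\prime) = a\, h(\tilde{Y}^\prime - b_2) + b_1$ then yields $\hat{X}^\prime = a\hat{X} + b_1$, whence the estimation error transforms as $X^\prime - \hat{X}^\prime = (aX + b_1) - (a\hat{X} + b_1) = a(X - \hat{X})$. Using $a \in \{-1, +1\}$, so that $a^2 = 1$, the squared errors agree pointwise, $(X^\prime - \hat{X}^\prime)^2 = (X - \hat{X})^2$, and therefore $\E[(X^\prime - \hat{X}^\prime)^2] = \E[(X - \hat{X})^2]$.

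Combining these, the map $(g, h) \mapsto (g^\prime, h^\prime)$ sends feasible pairs to feasible pairs and preserves the objective value exactly, as does its inverse. Hence the two problems have equal optimal values, and a pair is optimal for \textit{Problem 2} if and only if its preimage is optimal for \textit{Problem 1}; applying this to the optimal $(g^\ast, h^\ast)$ produces the claimed $(g^{\prime\ast}, h^{\prime\ast})$. I do not anticipate a genuine obstacle here, as the argument is a clean change of variables; the only points demanding care are verifying that the reliance on $a^2 = 1$ (rather than a general scaling) is precisely what renders both the distortion and the power constraint invariant, and confirming that no information is lost in passing between the two coordinate systems, since $a$, $b_1$, $b_2$ are known constants available to both encoder and decoder.
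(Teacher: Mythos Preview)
Your proposal is correct and follows essentially the same approach as the paper's own proof: both exhibit the cost- and feasibility-preserving correspondence $(g,h)\leftrightarrow(g',h')$, verify $Y'=Y$ (hence identical power), compute $\hat{X}'=a\hat{X}+b_1$ so that $(X'-\hat{X}')^2=(X-\hat{X})^2$ via $a^2=1$, and then transfer optimality in both directions. The paper presents the two directions of the bijection as separate paragraphs (``Conversely, \ldots''), whereas you phrase it as a single bijection with inverse; the content is the same.
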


\begin{proof}For any encoding/decoding policies $(g,h)$ satisfying the power constraint in \textit{Problem 1}, we have
$$
Y = g(X) \mbox{, } \E [Y^2] \leq P_T \mbox{, } \hat{X} = h(\tilde{Y}) = h(Y+V).
$$
Define the encoding/decoding policies $(g^\prime,h^\prime)$ in \textit{Problem 2} as follows:
$$
\begin{array}{lcl}
Y^\prime=g^{\prime}(X^\prime)= g\left(\frac{X^\prime-b_1}{a}\right) \vspace{0.2cm}
\\ \hat{X^\prime} = h^{\prime}(\tilde{Y^\prime})= a\cdot h(\tilde{Y}^\prime-b_2)+b_1.
\end{array}
$$
where $Y^\prime$ is the output of the encoder, $\tilde{Y^\prime}$ is the noise corrupted message received by the decoder, and $\hat{X^\prime}$ is the output of the decoder in \textit{Problem 2}. Furthermore,
$$
\tilde{Y^\prime} = Y^\prime + V^\prime,
$$
and
$$
Y^\prime = g\left(\frac{X^\prime-b_1}{a}\right) = g(X)= Y.
$$
Hence, $\E [{Y^\prime}^2] = \E [{Y}^2] \leq P_T$, which implies that the pair $(g^\prime,h^\prime)$ satisfies the power constraint of the encoder in \textit{Problem 2}. Moreover,
$$
\begin{array}{lcl}
\hat{X^\prime} &= a\cdot h(\tilde{Y}^\prime-b_2)+b_1 &= a\cdot h(Y^\prime+V^\prime-b_2)+b_1, \vspace{0.2cm}
\\ &= a\cdot h(Y+V)+b_1 &= a\hat{X}+b_1.
\end{array}
$$
Let $J_1(g,h)$ be the cost corresponding to $(g,h)$ in \textit{Problem 1} , and $J_2(g^\prime,h^\prime)$ be the cost corresponding to $(g^\prime,h^\prime)$ in \textit{Problem 2}. Then,
$$
\begin{array}{lcl}
J_2(g^\prime,h^\prime) &=& \E\left[\big(X^\prime-\hat{X^\prime}\big)^2\right] \vspace{0.2cm}
\\ &=&  \E\left[\big(aX+b_1-a\hat{X}-b_1\big)^2\right] \vspace{0.2cm}
\\ &=& a^2\cdot\E\left[(X-\hat{X})^2\right] = J_1(g,h).
\end{array}
$$
Conversely, for any encoding/decoding policies $(g^\prime,h^\prime)$ satisfying the power constraint in \textit{Problem 2}, define the encoding/decoding policies $(g,h)$ in \textit{Problem 1} as follows:
$$
\begin{array}{lcl}
Y=g(X) = g^\prime\left(aX+b_1\right) \vspace{0.2cm},
\\ \hat{X} = h(\tilde{Y}) = \frac{1}{a}[h^\prime(\tilde{Y}+b_2)-b_1].
\end{array}
$$
By the same procedure it can be shown that the pair $(g,h)$ satisfies the power constraint of the encoder in \textit{Problem 1} and $J_1(g,h)=J_2(g^\prime,h^\prime)$. Therefore, the  costs of \textit{Problem 1} and \textit{Problem 2} are equivalent. Moreover, if the optimal cost of \textit{Problem 1} is achieved by $(g^\ast,h^\ast)$, then the optimal cost of \textit{Problem 2} can be achieved by $(g^{\prime\ast},h^{\prime\ast})$.

\end{proof}

\begin{lemma}
\label{exponential source and gamma noise}
Consider the communication problem described above, assume that the input signal $X$ has exponential distribution with parameter $\lambda$, and  channel noise $V$ has gamma distribution with parameters $(k,\theta)$. Furthermore, let $\theta, P_T$ satisfy $\theta=\sqrt{P_T}$.
Then the optimal encoding/decoding policies $(g^{\ast},h^{\ast})$ are as follows:
$$
\begin{array}{lcl}
Y &=& g^{\ast}(X) = \alpha X - \alpha\lambda^{-1}, \vspace{0.2cm}
\\ \hat{X} &=& h^{\ast}(\tilde{Y}) = \frac{1}{\alpha}\frac{\gamma}{\gamma+1}\tilde{Y}+\frac{\gamma}{\gamma+1}\lambda^{-1},
\end{array}
$$
where $\alpha = \lambda\sqrt{P_T}$. Moreover, the optimal cost is
$$
J(g^{\ast},h^{\ast}) = \frac{1}{\gamma+1}\frac{1}{\lambda^2} :=m.
$$
\end{lemma}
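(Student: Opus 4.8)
The plan is to derive the lemma directly from Lemmas~\ref{special case of matching conditions} and~\ref{matching conditions transform}, closing with a short second-moment computation for the cost. The guiding observation is that the exponential source $X$ and gamma noise $V$ treated here are precisely the mean-shifted versions of the zero-mean variables handled in Lemma~\ref{special case of matching conditions}: setting $X_0 := X - \lambda^{-1}$ and $V_0 := V - k\theta$, the pair $(X_0,V_0)$ is exactly the centered exponential/gamma pair of that lemma, for which the optimal encoder and decoder are linear and given by~\eqref{prior work optimal encoding and decoding}.

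First I would set up the affine correspondence demanded by Lemma~\ref{matching conditions transform}. In its notation I take $a=1$, $b_1=\lambda^{-1}$, and $b_2=k\theta$, so that $X = aX_0+b_1$ and $V = V_0+b_2$. Lemma~\ref{special case of matching conditions} supplies the zero-mean optimal policies $g^\ast(X_0)=\alpha X_0$ and $h^\ast(\tilde Y)=\tfrac{1}{\alpha}\tfrac{\gamma}{\gamma+1}\tilde Y$, and Lemma~\ref{matching conditions transform} transports them to the present non-centered problem as $g^{\prime\ast}(X)=g^\ast(X-\lambda^{-1})$ and $h^{\prime\ast}(\tilde Y)=h^\ast(\tilde Y-k\theta)+\lambda^{-1}$.

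Second I would check that these transported policies coincide with the claimed expressions. The encoder is immediate: $g^{\prime\ast}(X)=\alpha(X-\lambda^{-1})=\alpha X-\alpha\lambda^{-1}$. For the decoder, the transform gives $h^{\prime\ast}(\tilde Y)=\tfrac{1}{\alpha}\tfrac{\gamma}{\gamma+1}\tilde Y-\tfrac{1}{\alpha}\tfrac{\gamma}{\gamma+1}k\theta+\lambda^{-1}$, and substituting the parameter relations $\alpha=\lambda\theta$ and $\gamma=1/k$ from~\eqref{assumptions on parameters} collapses the constant term $-\tfrac{1}{\alpha}\tfrac{\gamma}{\gamma+1}k\theta+\lambda^{-1}$ to $\tfrac{\gamma}{\gamma+1}\lambda^{-1}$, matching the stated form exactly.

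Finally, the optimal cost follows from the cost equivalence in Lemma~\ref{matching conditions transform} (here $a^2=1$), so it suffices to evaluate the cost of the centered problem. Inserting the linear policies yields $X_0-\hat X_0=\tfrac{1}{\gamma+1}X_0-\tfrac{1}{\alpha}\tfrac{\gamma}{\gamma+1}V_0$; since $X_0,V_0$ are independent and zero-mean, $\E[(X_0-\hat X_0)^2]$ splits into $\tfrac{\sigma_X^2}{(\gamma+1)^2}+\tfrac{\gamma^2\sigma_V^2}{\alpha^2(\gamma+1)^2}$. Using $\sigma_X^2=P_T/\alpha^2$ and $\sigma_V^2=P_T/\gamma$ this telescopes to $\tfrac{P_T}{\alpha^2(\gamma+1)}=\tfrac{1}{\lambda^2(\gamma+1)}=m$. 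I anticipate no genuine obstacle: the content lies entirely in combining the two prior lemmas, and the only delicate point is the parameter bookkeeping that makes the decoder constant and the cost collapse to the advertised closed forms.
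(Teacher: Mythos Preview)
Your proposal is correct and follows essentially the same route as the paper: both apply Lemma~\ref{matching conditions transform} to Lemma~\ref{special case of matching conditions} with $a=1$, $b_1=\lambda^{-1}$, $b_2=k\theta$ to obtain the policies, and then compute the cost by expanding $X-\hat X$ as a linear combination of the (independent) source and noise. The only cosmetic difference is that the paper evaluates the cost directly in the non-centered problem whereas you invoke the cost equivalence and compute in the centered one; your explicit verification that the decoder's constant term simplifies to $\tfrac{\gamma}{\gamma+1}\lambda^{-1}$ is a detail the paper leaves implicit.
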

\begin{proof}Applying Lemma \ref{matching conditions transform} to Lemma \ref{special case of matching conditions}, by letting $a=1,b_1=\lambda^{-1},b_2=k\theta$, one should obtain optimal encoding/decoding policies $(g^{\ast},h^{\ast})$ described above. Furthermore,
$$
\begin{array}{lcl}
\hat{X} &=& \frac{1}{\alpha}\frac{\gamma}{\gamma+1}\tilde{Y}+\frac{\gamma}{\gamma+1}\lambda^{-1} = \frac{1}{\alpha}\frac{\gamma}{\gamma+1}(Y+V)+\frac{\gamma}{\gamma+1}\lambda^{-1}, \vspace{0.2cm}
\\&=& \frac{1}{\alpha}\frac{\gamma}{\gamma+1}(\alpha X-\alpha\lambda^{-1}+V)+\frac{\gamma}{\gamma+1}\lambda^{-1}, \vspace{0.2cm}
\\&=& \frac{\gamma}{\gamma+1}X+\frac{1}{\alpha}\frac{\gamma}{\gamma+1}V.
\end{array}
$$
Using \eqref{assumptions on parameters}, the cost of $(g^\ast,h^\ast)$ is computed as follows:
$$
\begin{array}{lcl}
J(g^\ast,h^\ast) &=& \E[(X-\hat{X})^2] \vspace{0.2cm}
\\ &=& \E\left[\big(X-\frac{\gamma}{\gamma+1}X-\frac{1}{\alpha}\frac{\gamma}{\gamma+1}V\big)^2\right] \vspace{0.2cm}
\\ &=& \frac{1}{(\gamma+1)^2}\bigg[\E[X^2]+\frac{\gamma^2}{\alpha^2}\E[V^2]-\frac{2\gamma}{\alpha} \E[XV]\bigg] \vspace{0.2cm}
\\ &=& \frac{1}{\gamma+1}\frac{1}{\lambda^2}.
\end{array}
$$
\end{proof}

\section{Main Results}
\label{Preliminary result}
\begin{theorem}
\label{one stage problem}
Consider the problem  in Section \ref{ProblemFormulation}-D.
\begin{enumerate}
\item The optimal sensor scheduling, encoding and decoding strategies are in the forms of:
$$
U_t = f_t (X_t) \mbox{, } Y_t = g_t (\tilde{X}_t) \mbox{, } \hat{X}_t = h_t(\tilde{Y}_t,S_t).
$$
\item The optimal decision strategies are stationary, i.e., there exists $(\textbf{f}^\ast,\textbf{g}^\ast,\textbf{h}^\ast)$ minimizing
$$
J(\textbf{f},\textbf{g},\textbf{h})=\E \left\{\sum_{t=1}^T cU_t+{(X_t-\hat{X}_t)}^2\right\},
$$
where
$$
\begin{array}{lcl}
\textbf{f}^\ast = \{f_1^\ast,f_2^\ast,\ldots,f_T^\ast\},&f_1^\ast=\cdots=f_T^\ast:=f^\ast, \vspace{0.2cm}
\\ \textbf{g}^\ast = \{g_1^\ast,g_2^\ast,\ldots,g_T^\ast\},&g_1^\ast=\cdots=g_T^\ast:=g^\ast, \vspace{0.2cm}
\\ \textbf{h}^\ast = \{h_1^\ast,h_2^\ast,\ldots,h_T^\ast\},&h_1^\ast=\cdots=h_T^\ast:=h^\ast.
\end{array}
$$
\end{enumerate}
\end{theorem}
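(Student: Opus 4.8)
The plan is to exploit the fact that, since $\{X_t\}$ and $\{V_t\}$ are independent i.i.d.\ processes and the source carries no dynamics driven by the decisions, the only mechanism coupling the stages across time is the agents' memories. I would show this coupling is vacuous, so that the objective separates into $T$ identical one-stage problems, which simultaneously yields the current-information structure in part (1) and the stationarity in part (2).

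First I would reduce the information sets. Because each agent's past outputs are deterministic functions of its past inputs --- $Y_{1:t-1}$ of $\tilde X_{1:t-1}$, and $U_{1:t-1}$ of $X_{1:t-1}$ through the recursions defining $\textbf{f}$ and $\textbf{g}$ --- the explicit dependence on own past outputs can be absorbed, leaving $U_t=f_t(X_{1:t})$, $Y_t=g_t(\tilde X_{1:t})$, and $\hat X_t=h_t(\tilde Y_{1:t},S_{1:t})$. By causality, every transmitted signal at time $s$ is a function of $(X_{1:s},V_{1:s})$; in particular $(\tilde Y_{1:t-1},S_{1:t-1})$ depends only on $(X_{1:t-1},V_{1:t-1})$.

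Next I would write $J=\sum_{t=1}^T \E[cU_t+(X_t-\hat X_t)^2]$ and argue, by backward induction on $t$, that once stages $t+1,\dots,T$ are restricted to current-information strategies, the time-$t$ decisions influence only the time-$t$ summand. No anticipation kills any effect on earlier stages, and a current-information decoder at a future stage $t'>t$ uses only $(\tilde Y_{t'},S_{t'})$, which relates to $X_{t'}$ with $X_{t'}$ independent of $X_t$; hence $Y_t,\tilde X_t,U_t$ are irrelevant to the future cost. Within stage $t$ the problem then reduces to a single-shot scheduling/communication problem. For the decoder, the optimal estimate is the conditional mean $\E[X_t\mid \tilde Y_{1:t},S_{1:t}]$; with a current-information encoder, $(\tilde Y_t,S_t)$ is a function of $(X_t,V_t)$ while $(\tilde Y_{1:t-1},S_{1:t-1})$ is a function of the independent block $(X_{1:t-1},V_{1:t-1})$, so the independence $(X_t,V_t)\perp(X_{1:t-1},V_{1:t-1})$ gives $\E[X_t\mid \tilde Y_{1:t},S_{1:t}]=\E[X_t\mid \tilde Y_t,S_t]$, and $h_t$ needs only current data. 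For the scheduler, $X_{1:t-1}\perp X_t$ renders past observations uninformative about the only stage still affected, so $U_t=f_t(X_t)$ is without loss. This establishes part (1).

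The hard part is the encoder, since it may legitimately feed the independent past $\tilde X_{1:t-1}$ into $Y_t$, i.e.\ use it as private randomness, and one must rule out that such randomization --- possibly exploited by a nonlinear decoder --- lowers the mean-squared error. I would handle this by pairing the encoder with its optimal conditional-mean decoder and comparing $\E[\Var(X_t\mid \tilde Y_t,S_t)]$ for an encoder $g_t(\tilde X_{1:t})$ against the best current-information encoder under the same power budget: decomposing $Y_t=\E[Y_t\mid X_t]+W_t$ with $W_t$ a function of the past and hence independent of $X_t$, the component $W_t$ only augments the effective channel disturbance and consumes power, so it cannot decrease the estimation error; thus $Y_t=g_t(\tilde X_t)$ is optimal. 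Finally, because the source law, the noise law, the per-transmission cost $c$, and the per-stage constraint $\E[Y_t^2\mid U_t=1]\le P_T$ are identical for every $t$, the $T$ decoupled one-stage problems coincide, so a single minimizing triple $(f^\ast,g^\ast,h^\ast)$ is optimal at every stage, giving the stationarity claim in part (2).
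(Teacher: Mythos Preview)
Your overall plan---exploit the i.i.d.\ structure to decouple the horizon into identical one-shot problems---is right and matches the paper's. But your encoder step has a gap. You write $Y_t=\E[Y_t\mid X_t]+W_t$ and assert that $W_t$ is ``a function of the past and hence independent of $X_t$.'' In general this fails: on $\{U_t=1\}$ we have $Y_t=g_t(\tilde X_{1:t-1},X_t)$, and the residual $W_t=g_t(\tilde X_{1:t-1},X_t)-\E[g_t(\tilde X_{1:t-1},X_t)\mid X_t]$ still depends on $X_t$ unless $g_t$ happens to be additive in the past. It has conditional mean zero given $X_t$, so it is uncorrelated with every function of $X_t$, but it is not independent of $X_t$ and cannot simply be lumped with the channel noise. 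Worse, the decoder's memory $(\tilde Y_{1:t-1},S_{1:t-1})$ is correlated with the encoder's past $\tilde X_{1:t-1}$, so this is not even private randomization; ruling out a decoder that exploits that correlation is exactly the nontrivial part you still owe.

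The paper sidesteps this with a sandwich rather than an agent-by-agent argument. At the terminal stage it compares the original problem (call it T1) with one in which \emph{all} agents share the full past $I_T$ (T2) and one with no memory at all (T3), so that trivially $\min J_{T2}\le \min J_{T1}\le \min J_{T3}$. In T2, because $(X_T,V_T)\perp I_T$ and neither the stage cost nor the power constraint involves $I_T$, conditioning on $I_T=i$ reproduces the T3 problem for every $i$; hence $\min J_{T2}=\min J_{T3}$ and the chain collapses. The enlargement to \emph{common} past information is what kills the randomization issue cleanly: once the past is shared, any dependence on it is common randomization over identical one-shot instances, which cannot help. This handles scheduler, encoder, and decoder in one stroke; backward induction then peels off stage $T$, and stationarity follows as you say. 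Your scheduler and decoder reductions are fine once the encoder is current-information, but for the encoder you need this enlargement trick (or an equivalent common-randomization argument), not the additive decomposition you propose.
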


\begin{proof}At time $t=T$, we want to design $(f_T,g_T,h_T)$ to minimize
$$
J_{T_1}(f_T,g_T,h_T)=\E\left\{cU_T+{(X_T-\hat{X}_T)}^2\right\},
$$
subject to the power constraint of the encoder, called \textit{Problem T1}. Let $I_{sT},I_{eT},I_{dT}$ be the information about the past system states available to the sensor, the encoder, and the decoder, respectively, at time $T$, i.e.,  $I_{sT} = \{X_{1:T-1},U_{1:T-1}\}$, $I_{eT} = \{\tilde{X}_{1:T-1},Y_{1:T-1}\}$, and $I_{dT} = \{\tilde{Y}_{1:T-1},S_{1:T-1}\}$. Recall that the decisions at time $T$ are generated by $U_T=f_T(X_T,I_{sT})$, $Y_T=g_T(\tilde{X}_T,I_{eT})$, $\hat{X}_T=h_T(\tilde{Y}_T,S_T,I_{dT})$.

Let $I_T$ be the information set about the past system states at time $T$:
$$
I_T=\{X_{1:T-1},U_{1:T-1},\tilde{X}_{1:T-1},Y_{1:T-1},\tilde{Y}_{1:T-1},S_{1:T-1}\}.
$$
Then $I_{sT},I_{eT},I_{dT}\in I_T$. Consider another problem, called \textit{Problem T2}, where $I_T$ is available to the sensor, the encoder, and the decoder, and we want to design $(f_T^{\prime},g_T^{\prime},h_T^{\prime})$ to minimize
$$
J_{T_2}(f_T^\prime,g_T^\prime,h_T^\prime)=\E\left\{cU_T+{(X_T-\hat{X}_T)}^2\right\}.
$$
subject to the power constraint of the encoder, where $U_T=f_T^\prime(X_T,I_T)$, $Y_T=g_T^\prime(\tilde{X}_T,I_T)$, $\hat{X}_T=  h_T^\prime(\tilde{Y}_T,S_T,I_T)$. Since the sensor, the encoder, and the decoder can always ignore the redundant information and behave as if they only know $I_{sT},I_{eT},I_{dT}$, respectively, the performance of system in \textit{Problem T2} is no worse than that of the \textit{Problem T1}, i.e.,
$$
\underset{(f_T^\prime,g_T^\prime,h_T^\prime)}{\min}J_{T_2}(f_T^\prime,g_T^\prime,h_T^\prime) \leq \underset{(f_T,g_T,h_T)}{\min}J_{T_1}(f_T,g_T,h_T).
$$
Similarly, consider a third problem, call it \textit{Problem T3}, where $I_{sT},I_{eT},I_{dT}$ are not available to the sensor, the encoder, and the decoder, respectively. We want to design $(f_T^{\prime\prime},g_T^{\prime\prime},h_T^{\prime\prime})$ to minimize
$$
J_{T_3}(f_T^{\prime\prime},g_T^{\prime\prime},h_T^{\prime\prime})=\E\left\{cU_T+{(X_T-\hat{X}_T)}^2\right\},
$$
subject to the power constraint of the encoder, where $U_T=f_T^{\prime\prime}(X_T)$, $Y_T=g_T^{\prime\prime}(\tilde{X}_T)$, $\hat{X}_T= h_T^{\prime\prime}(\tilde{Y}_T,S_T)$. By a similar argument as above, the system in \textit{Problem T1} cannot perform worse than the system in \textit{Problem T3}, namely,
$$
\underset{(f_T,g_T,h_T)}{\min}J_{T_1}(f_T,g_T,h_T) \leq \underset{(f_T^{\prime\prime},g_T^{\prime\prime},h_T^{\prime\prime})}{\min}J_{T_3}(f_T^{\prime\prime},g_T^{\prime\prime},h_T^{\prime\prime}).
$$
Let us come back to \textit{Problem T2}. Since the communication cost $c$, the distortion function $\rho(\cdot,\cdot)$, and the power constraint of the encoder do not depend on $I_T$,
and since $\{X_t\}$ and $\{V_t\}$ are i.i.d. random processes, $X_T$ and $V_T$ are  also independent of $I_T$, and
there is no loss of optimality by restricting
$$
U_T = f_T^\prime(X_T) \mbox{, } Y_T = g_T^\prime(\tilde{X}_T) \mbox{, } \hat{X}_T = h_T^\prime(\tilde{Y}_T,S_T),
$$
and
$$
\underset{(f_T^\prime,g_T^\prime,h_T^\prime)}{\min}J_{T_2}(f_T^\prime,g_T^\prime,h_T^\prime) = \underset{(f_T^{\prime\prime},g_T^{\prime\prime},h_T^{\prime\prime})}{\min}J_{T_3}(f_T^{\prime\prime},g_T^{\prime\prime},h_T^{\prime\prime})
$$
The equality above indicates in \textit{Problem T1} the sensor, the encoder and the decoder can ignore their information about the past, namely $I_{sT}$, $I_{eT}$, and $I_{dT}$, respectively, and there is no loss of optimality by restricting
$$
U_T = f_T(X_T) \mbox{, } Y_T = g_T(\tilde{X}_T) \mbox{, } \hat{X}_T = h_T(\tilde{Y}_T,S_T),
$$
which proves (1).

Since $(f_T,g_T,h_T)$ does not take $I_T$ as a parameter, the design of  $(f_T,g_T,h_T)$ is independent of the design of $(f_{1:T-1},g_{1:T-1},h_{1:T-1})$. Hence the problem can be viewed as a $T-1$ stages problem and a one stage problem.

By induction, we can show that the design of $(f_1,g_1,h_1)$, $(f_2,g_2,h_2)$, $\dots$, $(f_T,g_T,h_T )$ are mutually independent, where $(f_t,g_t,h_t )$ is designed to minimize
$$
J(f_t,g_t,h_t)=\E\left\{cU_t+{(X_t-\hat{X}_t)}^2\right\},
$$
subject to the power constraint of the encoder. Furthermore, since ${X_t}$ and ${V_t}$ are i.i.d. random processes. The optimal $(f_t,g_t,h_t)$ should be the same for all $t=1,2,\ldots,T$, which proves (2).

\end{proof}

By Theorem \ref{one stage problem}, the problem can be reduced to the ``one-stage" problem and the objective is to determine $(f^\ast,g^\ast,h^\ast)$. Therefore for simplicity we suppress the subscript for time in all the expressions for the rest of the paper.

\begin{theorem}
\label{linear encoding and decoding corresponding to threshold based policy}
Consider the problem formulated in Section \ref{ProblemFormulation}-D. If the sensor applies symmetric threshold-based scheduling policy $f$ as follows:
$$
U=f(X)=
\begin{cases}
1, & \mbox{if } |X|>\beta \vspace{0.2cm}
\\ 0, & \mbox{if } |X|\leq\beta
\end{cases}
$$
where $\beta$ is called the threshold, then the encoding/decoding policies $(g,h)$ described below are jointly optimal corresponding to $f$:
$$
\begin{array}{lcr}
g(\tilde{X})\;\;\;\;=
\begin{cases}
\alpha|\tilde{X}|-\alpha\beta-\alpha\lambda^{-1}, & \mbox{if } \tilde{X} \neq \epsilon \vspace{0.2cm}
\\ \epsilon, & \mbox{if } \tilde{X} = \epsilon
\end{cases}
\vspace{0.2cm}
\\ h(\tilde{Y},S)=
\begin{cases}
S\cdot\left(\frac{1}{\alpha}\frac{\gamma}{\gamma+1}\tilde{Y}+\frac{\gamma}{\gamma+1}\lambda^{-1}+\beta\right),  \mbox{if } \tilde{Y},S \neq \epsilon \vspace{0.2cm}
\\ 0, \;\;\;\;\;\;\;\;\;\;\;\;\;\;\;\;\;\;\;\;\;\;\;\;\;\;\;\;\;\;\;\;\;\;\;\;\;\;\;\;\;\;\; \mbox{if } \tilde{Y},S = \epsilon
\end{cases}
\end{array}
$$
where $\alpha = \lambda\sqrt{P_T}$, $\gamma=\frac{P_T}{k\theta^2}$.
\end{theorem}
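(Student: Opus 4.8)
The plan is to fix the threshold policy $f$ and split the one-stage cost into the contributions from $U=0$ and $U=1$. The communication term $c\,\Prob(|X|>\beta)$ and the no-transmission estimation term depend only on $f$: when $U=0$ the encoder is silent and the decoder knows only that $|X|\le\beta$, so the best estimate is $\E[X\mid |X|\le\beta]=0$ by symmetry of the Laplace density, which is exactly the $\epsilon$-branch of $h$. Since the power constraint conditions on $U=1$, designing $(g,h)$ reduces to minimizing $\E[(X-\hat X)^2\mid U=1]$ subject to $\E[Y^2\mid U=1]\le P_T$.

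First I would reduce the $U=1$ problem to \textit{Problem 1}. Because $|X|$ is exponential$(\lambda)$, the memoryless property makes the overshoot $W:=|X|-\beta$, conditioned on $|X|>\beta$, again exponential$(\lambda)$, while $S=\sgn(X)$ is independent and uniform on $\{-1,+1\}$. The side channel delivers $S$ noiselessly, and since $X=S(W+\beta)$ with $S$ known, writing $\hat X=S\,\widehat{|X|}$ gives $(X-\hat X)^2=(|X|-\widehat{|X|})^2=(W-\widehat W)^2$ with $\widehat W:=\widehat{|X|}-\beta$. Conditioned on either value of $S$ the channel output is $\tilde Y=g(X)+V$, so estimating $W$ through the gamma channel is exactly \textit{Problem 1} with an exponential$(\lambda)$ source. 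Invoking Lemma~\ref{exponential source and gamma noise} at budget $P_T$ yields $g^\ast(W)=\alpha W-\alpha\lambda^{-1}$ and $\widehat W=\tfrac1\alpha\tfrac{\gamma}{\gamma+1}\tilde Y+\tfrac{\gamma}{\gamma+1}\lambda^{-1}$; substituting $W=|\tilde X|-\beta$ and $\hat X=S(\widehat W+\beta)$ reproduces the stated $(g,h)$, and $\E[Y^2\mid U=1]=\alpha^2\Var(W)=P_T$ shows the power budget is met with equality.

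The step I expect to be the main obstacle is proving this reduction loses no optimality over all $(g,h)$, not just over sign-symmetric magnitude-based encoders. A general encoder sees $\tilde X=X$ and could split its power unequally between $\{X>\beta\}$ and $\{X<\beta\}$, or squander power re-encoding the sign already carried by the side channel. To handle this I would set $\Pi_\pm:=\E[g(X)^2\mid U=1,S=\pm1]$, so the constraint becomes $\tfrac12(\Pi_++\Pi_-)\le P_T$, and lower-bound the conditional distortion for each sign by $D^\ast(\Pi_\pm)$, the value function of \textit{Problem 1} at power $\Pi_\pm$. The total distortion is then at least $\tfrac12\bigl(D^\ast(\Pi_+)+D^\ast(\Pi_-)\bigr)$. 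Because $D^\ast$ is non-increasing (surplus power may be ignored) and convex (a randomized time-share between two operating points realizes the convex combination of their powers and distortions), monotonicity and Jensen's inequality give $\tfrac12\bigl(D^\ast(\Pi_+)+D^\ast(\Pi_-)\bigr)\ge D^\ast(P_T)$, so equal allocation at the full budget is optimal. Finally, the assumption $\theta=\sqrt{P_T}$ makes the matching hypothesis of Lemma~\ref{exponential source and gamma noise} hold precisely at power $P_T$, so $D^\ast(P_T)=\tfrac{1}{\gamma+1}\tfrac{1}{\lambda^2}$ is attained exactly by the linear policies above, closing the gap and establishing joint optimality of $(g,h)$.
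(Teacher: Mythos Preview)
Your proposal is correct and follows essentially the same route as the paper: condition on $U$ and on the sign $S$, use the memoryless property of the Laplace/exponential to write the transmitted magnitude as $\beta$ plus an exponential$(\lambda)$ variable, and then invoke Lemma~\ref{exponential source and gamma noise} (the exponential-source/gamma-noise instance of the matching theorem) to obtain the affine encoder/decoder on each branch. The paper simply treats Cases $S=+$ and $S=-$ separately via Lemma~\ref{matching conditions transform} and Lemma~\ref{exponential source and gamma noise}, each at budget $P_T$, and stops there.

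The one place you go beyond the paper is your last paragraph: you explicitly argue that unequal power allocation across the two sign branches cannot help, by lower-bounding each branch's distortion by $D^\ast(\Pi_\pm)$ and using monotonicity and convexity of $D^\ast$ together with Jensen. The paper's proof leaves this step implicit (it just solves each branch at $P_T$). Your argument is sound, though the time-sharing justification of convexity is a bit delicate in a single-shot problem without explicit common randomness; a simpler and fully rigorous alternative is to note that the $S=+$ and $S=-$ sub-problems are isomorphic (via $X\mapsto -X$), so any asymmetric allocation can be symmetrized without increasing cost, forcing $\Pi_+=\Pi_-\le P_T$, after which monotonicity alone yields the conclusion.
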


\begin{proof}
Case \rom{1}. $U=0$, $\tilde{X},\tilde{Y},S=\epsilon$. The optimal estimator should be the conditional mean:
$$
\hat{X}=\E[X|U=0]=\E[X||X|\leq\beta]=0,
$$
where the third equality is due to the fact that $X$ is symmetrically distributed.

Case \rom{2}. $U=1$, $S=+$. The problem collapses to Problem 2. Conditioned on $\tilde{X}=X>\beta$ (which is equivalent to $U=1$, $S=+$), $\tilde{X}$ is distributed (has density) as follows:
\begin{equation*}
p_{\tilde{X}}(x) =
\begin{cases}
\lambda e^{-\lambda (x-\beta)}, & \mbox{if } x\geq \beta \vspace{0.2cm}
\\ 0, & \mbox{if } x<\beta
\end{cases}
\end{equation*}
Hence, $\tilde{X}$ can be written as $\tilde{X} = X_e + \beta$ where $X_e$ has exponential distribution with parameter $\lambda$. Since $X$ and $V$ are independent, $X>\beta$ does not affect the distribution of $V$. Therefore, $V$ has gamma distribution with parameters $(k,\theta)$. Moreover, the power constraint $P_T$ satisfies $\theta=\sqrt{P_T}$. Hence by applying Lemmas \ref{matching conditions transform} and \ref{exponential source and gamma noise}, we have the following optimal encoding/decoding policies and  cost
$$
\begin{array}{lcl}
Y &=& \alpha\tilde{X}-\alpha\beta-\alpha\lambda^{-1}, \vspace{0.2cm}
\\\hat{X} &=&\frac{1}{\alpha}\frac{\gamma}{\gamma+1}\tilde{Y}+\frac{\gamma}{\gamma+1}\lambda^{-1}+\beta.
\end{array}
$$
\begin{equation}
\label{optimal cost in case II}
\E [(X-\hat{X})^2|X > \beta]  = \frac{1}{\gamma+1}\frac{1}{\lambda^2} = m.
\end{equation}
Case \rom{3}. $U=1$, $S=-$. Again, the problem collapses to Problem 2. $U=1$, $S=-$ is equivalent to $\tilde{X}=X<-\beta$. Conditioned on that, $\tilde{X}$ can be written as
$$
\tilde{X} = -X_e - \beta,
$$
where $X_e$ has exponential distribution with parameter $\lambda$. Following the steps in Case II, we have
$$
\begin{array}{lcl}
Y &=& -\alpha\tilde{X}-\alpha\beta-\alpha\lambda^{-1} \vspace{0.2cm}
\\ \hat{X} &=& -\frac{1}{\alpha}\frac{\gamma}{\gamma+1}\tilde{Y}-\frac{\gamma}{\gamma+1}\lambda^{-1}-\beta
\end{array}
$$
\begin{equation}
\label{optimal cost in case III}
\E [(X-\hat{X})^2|X < -\beta]  = \frac{1}{\gamma+1}\frac{1}{\lambda^2} = m.
\end{equation}

\end{proof}

\begin{theorem}
\label{unique beta}
Consider the problem formulated in Section \ref{ProblemFormulation}-D. Suppose that the sensor is restricted to apply the symmetric threshold-based scheduling policy $f$ with threshold $\beta$, $\beta\in(0,+\infty)$, and the encoder/decoder apply the corresponding optimal encoding/decoding policies $(g,h)$  in Theorem \ref{linear encoding and decoding corresponding to threshold based policy}. Then, there exists a unique threshold $\beta^\ast$ minimizing the cost function among all the thresholds. Furthermore, $\beta^\ast=\sqrt{c+m}$, $m=\frac{1}{\gamma+1}\frac{1}{\lambda^2}$.
\end{theorem}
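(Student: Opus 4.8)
The plan is to reduce the statement to elementary single-variable calculus by first writing the expected one-stage cost as an explicit function $J(\beta)$ of the threshold. Since the sensor applies the symmetric threshold policy and the encoder/decoder apply the matching optimal policies from Theorem \ref{linear encoding and decoding corresponding to threshold based policy}, I would split the cost along the three cases identified there. When $|X|\leq\beta$ (Case I) there is no communication and the estimate is $\hat X=0$, contributing estimation cost $\E[X^2\mid |X|\le\beta]\,\Prob(|X|\le\beta)$; when $|X|>\beta$ (Cases II and III) the sensor pays $c$ and, crucially, equations \eqref{optimal cost in case II} and \eqref{optimal cost in case III} tell us the conditional estimation error equals the constant $m$ \emph{independently of} $\beta$. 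Collecting terms, the communication and the ``communicating'' estimation contributions combine to give
$$
J(\beta) = (c+m)\,\Prob(|X|>\beta) + \int_{-\beta}^{\beta} x^2 p_X(x)\,dx.
$$

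Next I would evaluate the two pieces using the Laplace density $p_X(x)=\tfrac12\lambda e^{-\lambda|x|}$. By symmetry, $\Prob(|X|>\beta)=e^{-\lambda\beta}$ and $\int_{-\beta}^{\beta} x^2 p_X(x)\,dx=\int_{0}^{\beta} x^2\lambda e^{-\lambda x}\,dx$, so that
$$
J(\beta) = (c+m)\,e^{-\lambda\beta} + \int_{0}^{\beta} x^2\lambda e^{-\lambda x}\,dx.
$$
The elegant feature here is that I never need the messy closed form of the partial second-moment integral: differentiating and applying the fundamental theorem of calculus gives the derivative of the integral term as simply $\beta^2\lambda e^{-\lambda\beta}$, while the first term differentiates to $-\lambda(c+m)e^{-\lambda\beta}$. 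Hence
$$
J'(\beta) = \lambda e^{-\lambda\beta}\bigl(\beta^2-(c+m)\bigr).
$$

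From this factored derivative the conclusion follows immediately. Since $\lambda e^{-\lambda\beta}>0$ for all $\beta$, the sign of $J'(\beta)$ is governed entirely by $\beta^2-(c+m)$: the derivative is negative for $\beta<\sqrt{c+m}$ and positive for $\beta>\sqrt{c+m}$, so $J$ is strictly decreasing then strictly increasing on $(0,\infty)$. This establishes both that the unique stationary point $\beta^\ast=\sqrt{c+m}$ (taking the positive root, as $\beta\in(0,+\infty)$) is a global minimizer and that no other minimizer exists. I do not anticipate a serious obstacle; the only point requiring care is the correct bookkeeping in assembling $J(\beta)$ — in particular invoking \eqref{optimal cost in case II}–\eqref{optimal cost in case III} to see that the conditional error is the $\beta$-free constant $m$, which is exactly what lets the communication cost and estimation cost coalesce into the single coefficient $(c+m)$ and makes the derivative factor so cleanly.
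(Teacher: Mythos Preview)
Your proposal is correct and follows essentially the same route as the paper: decompose the one-stage cost into the non-communication and communication events, invoke \eqref{optimal cost in case II}--\eqref{optimal cost in case III} to see the communicating branch contributes the $\beta$-free constant $c+m$, then differentiate in $\beta$ and read off the unique minimizer from the sign of $\beta^2-(c+m)$. The only cosmetic difference is that you substitute the Laplace density explicitly to obtain $J'(\beta)=\lambda e^{-\lambda\beta}\bigl(\beta^2-(c+m)\bigr)$, whereas the paper leaves the derivative in the form $2p_X(\beta)\bigl(\beta^2-(c+m)\bigr)$; the sign analysis and conclusion are identical.
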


\begin{proof} The cost function subject to $f$ with threshold $\beta$, $g$, and $h$ can be written as
$$
\begin{array}{lcl}
&\;&J(f,g,h) = \E\left[cU+{(X-\hat{X})}^2\right] \vspace{0.2cm}
\\ &=& \E\left[cU+{(X-\hat{X})}^2||X|\leq\beta\right]\cdot \Prob(|X|\leq\beta) \vspace{0.2cm}
\\ &+& \E\left[cU+{(X-\hat{X})}^2|X>\beta\right]\cdot \Prob(X>\beta) \vspace{0.2cm}
\\ &+& \E\left[cU+{(X-\hat{X})}^2|X<-\beta\right]\cdot \Prob(X<-\beta)
\end{array}
$$
Consider the expectation in the first term:
$$
\begin{array}{lcl}
&\;&\E\left[cU+{(X-\hat{X})}^2||X|\leq\beta\right], \vspace{0.2cm}
\\ &=& \E\left[X^2||X|\leq\beta\right] = \int_{-\beta}^\beta x^2p_X(x)\frac{1}{\Prob(|X|\leq\beta)}dx.
\end{array}
$$
Next, consider the expectation in the second term:
$$
\begin{array}{lcl}
&\;&\E\left[cU+{(X-\hat{X})}^2|X>\beta\right] \vspace{0.2cm}
\\ &=& c+\E\left[{(X-\hat{X})}^2|X>\beta\right] = c+m,
\end{array}
$$
where the last equality is due to \eqref{optimal cost in case II}. Similarly, by  \eqref{optimal cost in case III}, the expectation in the third term is $c+m$. Hence,
$$
\begin{array}{lcl}
J(f,g,h) = \int_{-\beta}^\beta x^2p_X(x)dx \vspace{0.2 cm}
\\+(c+m)\int_{\beta}^{\infty}p_X (x)dx+(c+m)\int_{-\infty}^{-\beta}p_X (x)dx \vspace{0.2 cm}
\\ =2\int_{0}^\beta x^2p_X(x)dx+2(c+m)\int_{\beta}^{\infty}p_X (x)dx.
\end{array}
$$
Taking derivative of $J(f,g,h)$ with respect to $\beta$,
$$
\frac{d}{d\beta}J(f,g,h)=2\beta^2p_X(\beta)-2(c+m)p_X(\beta).
$$
Since $p_X(\beta)>0$, we have
$$
\frac{d}{d\beta}J(f,g,h)=
\begin{cases}
0,\;\;\;& \mbox{if } \beta=\sqrt{c+m}:=\beta^\ast \vspace{0.2cm}
\\<0,\;\;\;& \mbox{if } \beta\in(0,\beta^\ast) \vspace{0.2cm}
\\>0,\;\;\;& \mbox{if } \beta\in(\beta^\ast,\infty)
\end{cases}
$$
Hence, $\beta^\ast$ is the unique global minimizer.
\end{proof}

\section{Conclusions}
\label{Conclusions}
In this paper, we considered a sensor scheduling and remote estimation problem with noisy communication channel between the sensor and the estimator. By making specific assumptions on the distributions of source and noise, we have obtained a solution which consists of a symmetric threshold-based sensor scheduling strategy and a pair of piecewise linear encoding/decoding strategies. The study on the notion of optimality of our approach in the practical, adversarial settings where channel noise is generated by a jammer, is left to an extended version of this paper due to space constraints.  Future directions for research include extensions to hard communication cost constraints and multi-dimensional settings.

\bibliographystyle{unsrt}
\bibliography{references}

\end{document}